\newlength{\bibitemsep}\setlength{\bibitemsep}{.2\baselineskip plus .05\baselineskip minus .05\baselineskip}
\newlength{\bibparskip}\setlength{\bibparskip}{0pt}
\let\oldthebibliography\thebibliography
\renewcommand\thebibliography[1]{%
  \oldthebibliography{#1}%
  \setlength{\parskip}{\bibitemsep}%
  \setlength{\itemsep}{\bibparskip}%
}
\newtheorem{theorem}{Theorem}
\newtheorem{corollary}[theorem]{Corollary}
\newtheorem{lemma}[theorem]{Lemma}
\newtheorem{remark}[theorem]{Remark}
\newtheorem{observation}[theorem]{Observation}
\newcommand{\fan}{fan}
\newcommand{\orgadget}{\ensuremath{\bigcirc}\kern-0.321cm\raisebox{-.022cm}{\ensuremath{\vee}}\kern.05cm}
\newcommand{\forceedge}{\ensuremath{\bigcirc}\kern-0.28cm\raisebox{-.0001cm}{\ensuremath{\uparrow}}\kern.08cm}
\newcommand{\forceedgesmall}{\ensuremath{\bigcirc}\kern-0.26cm\raisebox{-.0001cm}{\ensuremath{\uparrow}}\kern.08cm}
\newcommand{\xorgadget}{$\bigcirc$\kern-0.31cm\raisebox{-.0cm}{$\vee$}\kern.05cm}
\title{Recognition and Complexity of Point Visibility Graphs\thanks{This manuscript is a revised and extended version of a previous paper published in the proceedings of the 31st International Symposium on Computational Geometry (SoCG 2015). J. Cardinal is supported by the ARC (Action de Recherche Concert\'ee) project COPHYMA. U. Hoffmann is supported by the Deutsche Forschungsgemeinschaft within the research training group 'Methods for Discrete Structures' (GRK 1408).}}
\author{Jean Cardinal\ \ \ \ \ \ Udo Hoffmann\\
Universit\'e libre de Bruxelles (ULB), Brussels, Belgium}
\date{September 2016}
\begin{document}
\thispagestyle{empty}
\maketitle

\begin{abstract}
 A {\em point visibility graph} is a graph induced by a set of points in the plane,
 where every vertex corresponds to a point, and two vertices are adjacent whenever
 the two corresponding points are {\em visible} from each other, that is,
 the open segment between them does not contain any other point of the set. 

We study the recognition problem for point visibility graphs: given a simple undirected graph, decide whether it is the visibility graph of some point set in the plane. We show that the problem is complete for the existential theory of the reals. Hence the problem is as hard as deciding the existence of a real solution to a system of polynomial inequalities. The proof involves simple substructures forcing collinearities in all realizations of some visibility graphs, which are applied to the algebraic universality constructions of Mn\"ev and Richter-Gebert. This solves a longstanding open question and paves the way for the analysis of other classes of visibility graphs.

Furthermore, as a corollary of one of our construction, we show that there exist point visibility graphs
that do not admit any geometric realization with points on a grid. We also prove that the problem of recognizing
visibility graphs of points on a grid is decidable if and only if the existential theory of the rationals is decidable.
\end{abstract}

\section{Introduction}

Visibility between geometric objects is a cornerstone notion in discrete and computational geometry,
that appeared as soon as the late 1960s in pioneering experiments in robotics~\cite{LP79}.  
Visibility is involved in major themes that helped shape the field, such as art gallery and motion
planning problems~\cite{dutch,ghosh,artgallery}. 
However, despite decades of research on those topics, the combinatorial structures
induced by visibility relations in the plane are far from understood. 

Among such structures, {\em visibility graphs} are arguably the most natural.
In general, a visibility graph encodes the binary, symmetric visibility relation among sets of objects in the plane,
where two objects are visible from each other whenever there exists a straight line of sight between them
that does not meet any obstacle. More precisely, a {\em point visibility graph} associated 
with a set $P$ of points in the plane is a 
simple undirected graph $G=(P,E)$ such that two points of $P$ are adjacent if and only if
the open segment between them does not contain any other point of $P$. Note that the points play
both the roles of vertices of the graph and obstacles. In what follows, we will use the abbreviation
PVG for point visibility graph.

We consider the {\em recognition} problem for point visibility graphs: given a simple undirected graph
$G=(V,E)$, does there exists a point set $P$ such that $G$ is isomorphic to the visibility graph of $P$?
More concisely, the problem consists of deciding the property of being a point visibility graph of some
point set. As is often the case for geometric graphs, the recognition problem appears to be intractable under usual 
complexity-theoretic assumptions. In fact, recently, Roy gave an NP-hardness proof for this problem~\cite{roy2014point}.

\subsection{Our results}

We characterize the problem as complete for the 
existential theory of the reals; hence recognizing point visibility graphs is
as hard as deciding the existence of a solution to an arbitrary system of polynomial inequalities
over the reals. Equivalently, this amounts to deciding the emptiness of a semialgebraic set.
This complexity class is intimately related to fundamental results on {\em oriented matroids} and
{\em pseudoline arrangements} starting with the insights of Mn\"ev on the algebraic universality properties
of these structures~\cite{mnev1988universality}.
The notation $\exists\mathbb{R}$ has been proposed recently by Schaefer~\cite{S09} to refer to this class, motivated by the continuously 
expanding collection of problems in computational
geometry that are identified as complete for it.

The only known inclusion relations for $\exists\mathbb{R}$
are $NP\subseteq\exists\mathbb{R}\subseteq PSPACE$. It is known from the Tarski-Seidenberg Theorem that 
the first-order theory of real closed fields is decidable, but polynomial space algorithms for
problems in $\exists\mathbb{R}$ have been proposed only much more recently by Canny~\cite{canny1988}.

Whenever a graph is known to be a point visibility graph, the description of the point set 
as a collection of pairs of integer coordinates constitutes a natural certificate. 
Since it is not known whether $\exists\mathbb{R}\subseteq NP$, we should
not expect such a certificate to have polynomial size. 
In fact, we show that there exist point visibility graphs all realizations
of which have an irrational coordinate, and point visibility graphs that require
doubly exponential coordinates in any realization. We also prove that recognizing 
visibility graphs of points on a grid is decidable if and only if the existential theory 
of the rationals is decidable. This establishes an interesting connection between a natural graph
drawing problem and Hilbert's tenth problem over the rationals.

\subsection{Related work and Connections}

The recognition problem for point visibility graphs has been explicitly stated as an important 
open problem by various authors~\cite{KPW05}, and is listed as the first open problem in a recent survey
from Ghosh and Goswami~\cite{ghosh2013unsolved}.

A linear-time recognition algorithm has been proposed by Ghosh and Roy for {\em planar}
point visibility graphs~\cite{GR14}. For general point visibility graphs they showed that recognition problem lies in $\exists\mathbb{R}$.
More recently, Roy~\cite{roy2014point} published an ingenious
and rather involved NP-hardness proof for recognition of arbitrary point visibility graphs. 
Our result clearly implies NP-hardness as well, and, in our opinion, has a more concise proof.

Structural aspects of point visibility graphs have been studied by K\'ara, P\'or, and Wood~\cite{KPW05}, 
P\'or and Wood~\cite{PW10}, and Payne et al.~\cite{PPVW12}. Many fascinating open questions revolve around
the {\em big-line-big-clique} conjecture, stating that for all $k,\ell\geq 2$, there exists
an $n$ such that every finite set of at least $n$ points in the plane contains either $k$ pairwise
visible points or $\ell$ collinear points.

{\em Visibility graphs of polygons} are defined over the vertices of an arbitrary simple polygon in the plane, 
and connect pairs of vertices such that the open segment between them is completely contained in the interior
of the polygon. This definition has also attracted a lot of interest in the past twenty years. 
Ghosh gave simple properties of visibility graphs of polygons and conjectured
that they were sufficient to characterize
visibility graphs~\cite{G97}. These conjectures have been disproved by Streinu~\cite{S05}
via the notion of {\em pseudo-visibility} graphs, or visibility graphs of {\em pseudo-polygons}~\cite{ORS97}. 
A similar definition is given by Abello and Kumar~\cite{AK02}.
Roughly speaking, the relation between visibility and pseudo-visibility
graphs is of the same nature as that between arrangements of straight lines and pseudolines. 
Although, as Abello and Kumar remark, these results somehow suggest that the difficulty 
in the recognition task is due to a stretchability problem, the complexity of recognizing visibility graphs 
of polygons remains open, and it is not clear whether the techniques described in this paper can help characterizing it.
The influential surveys and contributions of Schaefer about $\exists\mathbb{R}$-complete problems in
computational geometry form an ideal point of entry in the field~\cite{S09,S12}.
Among such problems, let us mention
recognition of segment intersection graphs~\cite{kratochvil1994intersection}, recognition of unit distance graphs and realizability
of linkages~\cite{K02,S12}, recognition of disk and unit disk intersection graphs~\cite{MM13},
computing the rectilinear crossing number of a graph~\cite{B91}, simultaneous geometric graph embedding~\cite{kyncl2011simple}, 
and recognition of $d$-dimensional Delaunay triangulations~\cite{APT15}.

\subsection{Outline of the paper}

In Section~\ref{sec:uniquerepresentations}, we provide a simple visibility graph construction, 
a {\em fan}, all geometric realizations of which are guaranteed
to preserve a specified collection of subsets of collinear points. 
The proofs are elementary and only require a series of basic observations.

The main result of the paper is given in Section~\ref{sec:reduction}. We first recall the main notions and tools
used in the results from Mn\"ev~\cite{mnev1988universality} and
Shor~\cite{shor1991stretchability} for showing that realizability of abstract 
order types is complete for the existential theory of the reals.
We then combine these tools with the fan construction to produce families of point visibility
graphs that can simulate arbitrary arithmetic computations over the reals.

In Section~\ref{sec:grid}, we give two applications of the fan construction.
In the first, we show that there exists a point visibility graph that does not have any geometric
realization on the integer grid.
In other words, all geometric realizations of this point visibility graph are
such that at least one of the points has an irrational coordinate.
Another application of the fan construction follows,
where we show that there are point visibility graphs every grid
realization of which requires coordinates of values $2^{2^{\sqrt[3]{n}}}$
where $n$ denotes the number of vertices of the PVG.
Afterwards, we show that the recognition of visibility graphs of points on a grid (or, equivalently, with integer coordinates)
might be undecidable by reducing from the solvability of a system of polynomial (in)equalities over the rationals.

\subsection{Notation}

For the sake of simplicity, we slightly abuse notation and 
do not distinguish between a vertex of a point visibility graph and its 
corresponding point in a geometric realization.
We denote by $G[P']$ the induced subgraph of a graph $G=(P,E)$ with the vertex set $P'\subseteq P$.
For a point visibility realization $R$ we denote by $R[P']$ the induced subrealization containing only
the points $P'$. The PVG of this subrealization is in general not an induced subgraph of $G$.
By $N(p)$ we denote the open neighborhood of a vertex $p$.

The line through two points $p$ and $q$ is denoted by $\ell(p,q)$
and the open segment between $p$ and $q$ by $\overline{pq}$.
We will often call $\overline{pq}$ the \emph{sightline} between $p$ and $q$,
since $p$ and $q$ see each other iff $\overline{pq}\cap P=\emptyset$.
We call two sightlines $\overline{p_1q_1}$ and $\overline{p_2q_2}$ \emph{non-crossing} if
$\overline{p_1q_1}\cap\overline{p_2q_2}=\emptyset$.

For each point $p$ all other points of $G$ lie on $\deg(p)$ many rays
$R^p_1,\dots,R^p_{\deg(p)}$ originating from $p$.

\section{Point visibility graphs preserving collinearities}\label{sec:uniquerepresentations}

We first describe constructions of point visibility graphs, all the geometric realizations of which  
preserve some fixed subsets of collinear points.

\subsection{Preliminary observations}

\begin{figure}[ht]
\centering
 \includegraphics[width=.6\textwidth]{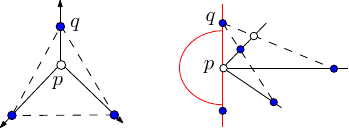}
 \caption{\label{fig:stableline_emptyhalfspace} (Lemma~\ref{lem:empty_halfspace}) Left: a point sees points on consecutive rays with small angle.
Right: a vertex $q$ of degree one in $G[N(p)]$ lies on the boundary of an empty halfspace.}
\end{figure}

In the realization of a PVG, the point $p$ sees exactly $\deg(p)$ many vertices,
hence all other points lie on $\deg(p)$ rays of origin $p$.

\begin{lemma}\label{lem:empty_halfspace}
  Let $q\in N(p)$ be a degree-one vertex in $G[N(p)]$.
  Then all points lie on one side of the line $\ell (p,q)$.
\end{lemma}
\begin{proof}
  If the angle between two consecutive rays is smaller than $\pi$, then
  every vertex on one ray sees every vertex on the other ray.
  Hence one of the angles between two rays of origin $p$ must be at least $\pi$ (see Figure~\ref{fig:stableline_emptyhalfspace}).
\end{proof}

\begin{corollary}\label{cor:pathUnique}
  If $G[N(p)]$ is an induced path, then the order of the path and the order of the rays coincide.
\end{corollary}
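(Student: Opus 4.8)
The plan is to deduce this corollary directly from Lemma~\ref{lem:empty_halfspace} by an inductive peeling argument along the path. Recall that $G[N(p)]$ is the visibility graph induced on the neighbours of $p$, and that all these neighbours lie on the $\deg(p)$ rays $R^p_1,\dots,R^p_{\deg(p)}$ emanating from $p$; since each neighbour is \emph{visible} from $p$, there is exactly one neighbour on each such ray. Thus the rays are in bijection with the vertices of $G[N(p)]$, and the claim to prove is that if $G[N(p)]$ happens to be an induced path $q_1 q_2 \cdots q_k$, then after suitably orienting the cyclic list of rays, the ray carrying $q_i$ is the $i$-th ray in angular order. In other words, consecutive vertices of the path sit on angularly consecutive rays.

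First I would establish the base case using Lemma~\ref{lem:empty_halfspace}. The endpoint $q_1$ of the path has degree one in $G[N(p)]$, so the lemma applies: all points of the realization lie on one side of the line $\ell(p,q_1)$, and the unique neighbour $q_2$ of $q_1$ lies on the ray forming the smallest angle with $\overline{q_1 p}$. Because all points lie in one closed halfplane bounded by $\ell(p,q_1)$, the ray $R$ through $q_1$ is an \emph{extreme} ray in the angular order around $p$; I may take it to be $R^p_1$, the first ray. The lemma's second assertion then pins $q_2$ to the angularly next ray $R^p_2$, since that is the ray making the smallest angle with $\overline{q_1p}$ among all rays lying in the halfplane.

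Next I would iterate. The key observation is that deleting the endpoint $q_1$ from the path leaves another induced path $q_2 q_3 \cdots q_k$, and within the subconfiguration the vertex $q_2$ now plays the role of a degree-one endpoint relative to the remaining neighbours. I would argue that the same extremality reasoning applies: having fixed that $q_1$ lies on the extreme ray $R^p_1$ and $q_2$ on $R^p_2$, the point $q_2$ must see exactly $q_1$ and $q_3$ among the path, and the halfplane constraint together with the ``smallest angle'' clause of Lemma~\ref{lem:empty_halfspace} forces $q_3$ onto $R^p_3$. Repeating this for $q_3,\dots,q_{k-1}$ in turn, each newly exposed endpoint lies on the next ray in angular order, so the order of the path and the angular order of the rays coincide.

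The main obstacle I anticipate is making the inductive step genuinely rigorous rather than merely suggestive: Lemma~\ref{lem:empty_halfspace} is stated for a degree-one vertex of the \emph{full} graph $G[N(p)]$, but in the induction I want to apply an analogous extremality fact to interior vertices of the path after the first rays have been fixed. I expect this is handled cleanly by the contrapositive embedded in the lemma's proof, namely that any two rays whose angle is smaller than $\pi$ are mutually fully visible, so a path vertex cannot ``skip'' an intervening ray without creating an extra edge and violating the induced-path hypothesis. I would therefore phrase the argument as: if $q_i$ and $q_{i+1}$ sat on non-consecutive rays, then the ray strictly between them (which is nonempty, carrying some other neighbour $q_j$) would have to be invisible from both, forcing an angle $\geq \pi$ on each side and quickly contradicting either the one-sidedness established in the base step or the degree constraints of the induced path. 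Ensuring this parity/extremality bookkeeping is airtight across all $k$ vertices is the part that needs care, but it follows the same elementary angle-comparison pattern as the lemma itself.
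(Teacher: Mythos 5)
Your proposal is correct and rests on the same two ingredients as the paper's own proof: Lemma~\ref{lem:empty_halfspace} applied to a degree-one endpoint of the path, and the fact (embedded in that lemma's proof) that points on angularly consecutive rays forming an angle smaller than $\pi$ see each other, so an induced-path vertex cannot skip a ray. The paper phrases this more directly---both endpoints bound empty halfplanes, hence every consecutive pair of rays forms an angle below $\pi$, hence each vertex is adjacent to the points on its two neighboring rays---which dispenses with your inductive peeling and the (technically invalid, though self-corrected) application of the lemma's ``smallest angle'' clause to interior vertices of degree two.
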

\begin{proof}
  By Lemma~\ref{lem:empty_halfspace} the two endpoints of the path lie on
  rays on the boundary of empty halfspaces. Thus all other rays form angles
  which are smaller than $\pi$, and thus they see their two neighbors
  of the path on their neighboring rays.
\end{proof}
\begin{observation}\label{obs:secondPoint}
  Let $q$, $q\not=p$, be a point that sees all points of $N(p)$.
  Then $q$ is the second point (not including $p$) on one of the rays emerging from $p$.
\end{observation}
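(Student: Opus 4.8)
The plan is to locate $q$ on the unique ray emerging from $p$ that contains it, and then exploit the linear order of the points along that ray to pin down its position. First I would observe that, since every point other than $p$ lies on one of the rays $R^p_1,\dots,R^p_{\deg(p)}$, the point $q$ lies on exactly one such ray, say $R^p_j$. Let $a$ denote the point of $R^p_j$ closest to $p$. Because no point of the set lies between $p$ and $a$ on this ray, the sightline $\overline{pa}$ is empty, so $p$ sees $a$ and hence $a\in N(p)$.

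Next I would invoke the hypothesis that $q$ sees every point of $N(p)$; in particular $q$ sees $a$. Since visibility is irreflexive, this already forces $q\neq a$, so $q$ is strictly farther from $p$ than $a$ along $R^p_j$. Now I would use that the points of $R^p_j$ are linearly ordered by their distance from $p$, writing them as $a=a_1,a_2,a_3,\dots$. For any index $k\geq 3$, the point $a_2$ lies strictly inside the segment $\overline{a_1a_k}$, so $a_k$ cannot see $a=a_1$. Hence the only point of $R^p_j$ lying beyond $a$ that still sees $a$ is its immediate successor $a_2$.

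Combining these steps, $q$ is a point of $R^p_j$ different from both $p$ and $a$ that nonetheless sees $a$; by the previous paragraph it must equal $a_2$, i.e.\ the second point (not counting $p$) on the ray $R^p_j$, which is exactly the claim. I do not expect a genuine obstacle here, since this is a short structural observation: the only points requiring care are the verification that the nearest point $a$ of $q$'s ray indeed belongs to $N(p)$, so that the hypothesis applies to it, and the along-ray blocking argument that no point past the second can see the first. Both follow immediately from the collinear ordering of the points on a single ray emanating from $p$.
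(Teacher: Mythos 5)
Your proof is correct and follows essentially the same route as the paper's one-line argument: the paper assumes $q$ is not the second point on its ray and notes that $q$ then cannot see the first point of that ray, which is a neighbor of $p$. You simply expand this contrapositive into a direct argument (identifying the ray, noting its first point lies in $N(p)$, and using the along-ray blocking order), with the additional explicit handling of the case $q=a$ via irreflexivity of visibility, which the paper leaves implicit.
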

\begin{proof}
  Assume $q$ is not the second point on one of the rays. 
  Then $q$ cannot see the first point on its ray which is a neighbor of $p$.
\end{proof}

This also yields the following observation.

\begin{observation}\label{obs:nonSecondPoint}
  Let $q$, $q\not= p$, be a point that is not the second point on one of the rays from $p$
  and sees all but one of the neighbors of $p$, say $r$. 
  Then $q$ lies on the ray of $r$.
\end{observation}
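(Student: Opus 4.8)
The plan is to mirror the argument used for Observation~\ref{obs:secondPoint}, but now anchored at the ray that contains $q$ itself. First I would recall that every point other than $p$ lies on one of the rays $R^p_1,\dots,R^p_{\deg(p)}$ emanating from $p$, so in particular $q$ lies on some such ray; call it $R^p_i$ and let $n$ denote the neighbour of $p$ on it, i.e.\ the point of $R^p_i$ closest to $p$ (the first point on the ray).

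The key step is to show that $q$ cannot see $n$. Since $q$ is not the second point on \emph{any} ray, it is in particular not the second point on its own ray $R^p_i$, so $q$ is either the first point on $R^p_i$ or at least the third. If $q$ is the first point, then $q=n$ and $q$ does not see $n$ simply because a point does not see itself. If $q$ is the third point or later, then the second point of $R^p_i$ lies strictly between $q$ and $n$ on the segment $\overline{qn}\subseteq R^p_i$, so it blocks the sightline and again $q$ does not see $n$. This is exactly the dichotomy already exploited in the proof of Observation~\ref{obs:secondPoint}.

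I would then combine this with the hypothesis. By assumption $q$ sees every neighbour of $p$ except the single vertex $r$; since we have just shown that $q$ fails to see the neighbour $n$ lying on its own ray, the uniqueness of the missed neighbour forces $n=r$. Hence $r$ is precisely the neighbour of $p$ on the ray $R^p_i$ through $q$, which is exactly the claim that $q$ lies on the ray of $r$.

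The only delicate point is the boundary case $q=n$, in which $q$ is itself a neighbour of $p$; I expect this to be the main thing to get right. It is handled uniformly with the generic case once one records that ``$q$ does not see the first point of its ray'' holds both when that point is strictly blocked and when it coincides with $q$. This is the same device that makes the proof of Observation~\ref{obs:secondPoint} go through, and it is precisely why the present statement can legitimately be asserted as following from the preceding argument.
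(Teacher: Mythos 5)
Your proof is correct and follows exactly the route the paper intends: the paper proves this observation by simply remarking that the argument for Observation~\ref{obs:secondPoint} (``$q$ is not the second point on its ray, hence cannot see the first point on its ray, which is a neighbor of $p$'') also establishes it, and your write-up is precisely that argument combined with the uniqueness of the missed neighbor to conclude $n=r$. Your explicit treatment of the boundary case $q=n$ is a detail the paper leaves implicit, but it does not change the approach.
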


\subsection{Fans}
\begin{figure}[ht]
\centering
 \includegraphics[scale=.7]{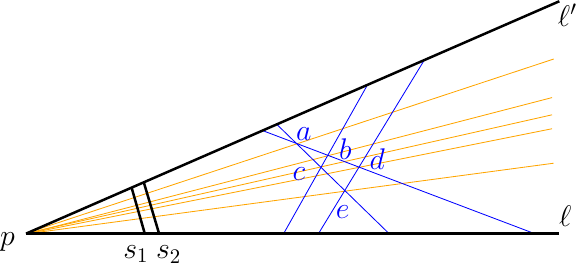}
 \caption{\label{fig:fan} A fan: a vertex is placed on each intersection of two lines/segments.}
\end{figure}

We have enough tools by now to show the uniqueness of a PVG obtained from the following construction, which is depicted in Figure~\ref{fig:fan}.
Consider a set $S$ of segments between
two lines $\ell$ and $\ell'$ intersecting in a point $p$,
such that each endpoint of a segment lies on $\ell$, $\ell'$ or another
segment. 
For each intersection of a pair of segments, construct a ray of origin $p$ and
going through this intersection point.
Add two segments $s_1$ and $s_2$ between $\ell$ and $\ell'$, such
that $s_1$ is the closest and $s_2$ is the second closest segments to~$p$.

We now put a point on each intersection of the segments and rays and construct the PVG of this set
of points. We call this graph the \emph{fan} of $S$ and denote it by $\fan(S)$.
Since we have the choice of the position of the segments $s_1$ and $s_2$
we can avoid any collinearity between a point on $s_1$ or $s_2$ and points on other segments,
except for the obvious collinearities on one ray. 
Thus every point sees all points on $s_1$ except for the one of the ray it lies on.

\begin{lemma}\label{lem:uniqueFan}
All realizations of a fan preserve collinearities between points that lie on one segment and between points that lie on one ray.
\end{lemma}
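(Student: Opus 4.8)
The plan is to fix the apex $p$ and reconstruct, from the abstract graph alone, first the rays $R^p_1,\dots,R^p_{\deg(p)}$ and then the segments. Since $p$ sees exactly $\deg(p)$ points, every geometric realization splits the remaining points into $\deg(p)$ rays of origin $p$, each carrying exactly one neighbour of $p$, namely its closest point; thus $N(p)$ is in bijection with the rays, and these neighbours are precisely the points of the innermost segment $s_1$. In the fan, consecutive points of $s_1$ see each other while non-consecutive ones are mutually hidden, so $G[N(p)]$ is an induced path; Corollary~\ref{cor:pathUnique} then forces the order of the rays to agree with the path, and Lemma~\ref{lem:empty_halfspace} pins the two extreme rays. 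This recovers the rays and their order combinatorially.

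Next I would assign every remaining vertex to a ray. The points of $s_2$ are exactly the second points on the rays and, by construction, each sees all of $N(p)$; conversely, by Observation~\ref{obs:secondPoint}, any vertex seeing all of $N(p)$ is a second point, so these vertices are identified. Every other vertex $q$ lies on a segment of $S$ and, in the fan, sees all of $N(p)$ except the single neighbour $r$ on its own ray; since such a $q$ does not see all of $N(p)$ it is not a second point, so Observation~\ref{obs:nonSecondPoint} places $q$ on the geometric ray through $r$. As $r$ is the closest point of that ray, $q$ lies on the ray indexed by $r$, exactly as in the fan. Hence the combinatorial partition of the vertices into rays coincides with the geometric one, and since all points sharing a ray are collinear with $p$, every ray-collinearity is preserved. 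The only point needing care is fixing the ray of each $s_2$-vertex, which I would determine from its adjacencies to the $S$-vertices already assigned, whose radial order on each ray is forced by the induced-path structure along that ray.

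It remains to preserve the segment-collinearities, which I expect to be the main obstacle. Once the rays and the radial order of points along each ray are known, observe that a line meets the rays in their angular order, so each segment $s$ contributes exactly one point $b_m$ to each ray $R^p_m$, the $b_m$ occurring in ray order with $b_m\sim b_{m+1}$ but $b_m\not\sim b_{m+2}$. The non-edge $b_m\not\sim b_{m+2}$ forces a point in the interior of $\overline{b_m b_{m+2}}$, and since this open segment meets only the intermediate ray $R^p_{m+1}$, the blocker lies on $R^p_{m+1}$; if it is $b_{m+1}$ we obtain collinearity of the triple, and collinearity of all consecutive triples yields that $s$ is a line.

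The difficulty is precisely to exclude a \emph{spurious} blocker, that is, a point of $R^p_{m+1}$ other than $b_{m+1}$ lying on $\overline{b_m b_{m+2}}$: a purely local three-point argument does not suffice, since such a configuration is realizable in isolation. I would therefore argue globally, anchoring at the two distinguished segments $s_1$ and $s_2$ — the only segments of fixed rank ($1$ and $2$) on every ray, and already identified above — and propagate outward. Concretely, I would exploit the property that in the fan every point sees all of $s_1$ except the point on its own ray: combined with the known radial order on each ray, this over-determined system of visibilities forbids any point of $R^p_{m+1}$ other than $b_{m+1}$ from sitting on $\overline{b_m b_{m+2}}$, forcing each segment in turn to be collinear. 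Making this propagation precise, rather than the reconstruction of the rays, is where the real work lies.
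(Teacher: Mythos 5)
Your first half --- reconstructing the rays --- is exactly the paper's argument: the $s_2$-points are identified via Observation~\ref{obs:secondPoint}, every remaining point is assigned to its ray via Observation~\ref{obs:nonSecondPoint}, the cyclic order of the rays comes from Corollary~\ref{cor:pathUnique}, and the order along each ray from the induced-path structure. That part is correct and matches the paper.

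For the segment-collinearities, however, you have correctly isolated the crux --- ruling out a \emph{spurious} blocker on the intermediate ray --- and then not proved it. Your final paragraph is a plan, not an argument: you assert that the visibilities with $s_1$ ``forbid'' a spurious blocker and concede that ``making this propagation precise \dots is where the real work lies.'' That is precisely the content of the lemma, and it is missing. The route you gesture at is also doubtful: the edges to $s_1$ essentially encode which ray a point lies on and were already spent in the first half; they say little about positions \emph{along} the rays, which is what a spurious blocker concerns. Likewise, ``propagating outward from $s_1,s_2$'' has no natural induction order, since (as you yourself note) $s_1$ and $s_2$ are the only segments with a fixed rank on every ray --- the original segments cross each other, so their ranks change from ray to ray. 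The paper instead closes the gap with a global counting-and-matching argument on each triple of consecutive rays $R_1,R_2,R_3$: every one of the segments yields a blocked sightline from $R_1$ to $R_3$; picking one such sightline for each point of $R_2$ (possible because segments that cross on $R_2$ share their blocker) together with $\overline{p_1p_3}$ gives a family of pairwise non-crossing sightlines, and disjoint open segments cannot share a blocker, so these sightlines need pairwise distinct blockers on $R_2$. As there are exactly as many points on $R_2$ as sightlines, every point of $R_2$ blocks exactly one of them; and since the orders along $R_1,R_2,R_3$ are already fixed, this bijection must be the order-preserving one, i.e.\ the original assignment, whence $p_2\in\overline{p_1p_3}$ and the triple is collinear. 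No counting step of this kind appears in your proposal, so as it stands the lemma is not established.
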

\begin{proof}
  We first show that the distribution of the points onto the rays of $p$ is unique.
  By construction, the points on $s_2$ see all the points on $s_1$, which are exactly the neighbors of $p$.
  Thus by Observation~\ref{obs:secondPoint} the points from $s_2$ are the second points of a ray.
  Since there is exactly one point for each ray on $s_2$, all the other points are not second points on a ray.
  By construction, each of the remaining points sees all but one point of $s_1$.
  Observation~\ref{obs:nonSecondPoint} gives a unique ray a point lies on.
  The order of the rays is unique by Corollary~\ref{cor:pathUnique}.
  On each ray the order of the points is as constructed, since the PVG of points on one ray 
  is an induced path.

  Now we have to show that the points originating from one segment are still collinear.
  Consider three consecutive rays $R_1,R_2,R_3$. 
  We consider a visibility between a point $p_1$ on $R_1$ and one point $p_3$ on $R_3$
  that has to be blocked by a point on $R_2$.
  Let $p_2$ be the original blocker from the construction.

  For each point $q$ on $R_2$ that lies closer to $p$ than $p_2$ there is a sightline blocked by $q$,
  and for each point $q$ that lies further away from $p$ than $p_2$ there is a sightline blocked by $q$.
  For each point $q$ on $R_2$ we pick one sightline between a point on $R_1$ and another point on $R_3$ that is blocked by $q$.
  This set of sightlines is non-crossing, since the segments only intersect on rays, hence on $R_2$.
  So we have a set of non-crossing sightlines and the same number of blockers available.
  Since the order on each ray is fixed, and the sightlines intersect $R_2$ in a certain order,
  the blocker for each sightline is uniquely determined and has to be the original blocker.
  By transitivity of collinearity all points from the segments remain collinear over all the rays.
\end{proof}

\section{$\exists\mathbb{R}$-completeness reductions and universality}\label{sec:reduction}

The existential theory of the reals ($\exists\mathbb{R}$) is a complexity
class defined by the following complete problem. 
We are given a well-formed quantifier-free formula $F(x_1,\dots,x_k)$ 
using the numbers $0$ and $1$, addition and multiplication operations, 
strict and non-strict comparison operators,
Boolean operators, and the variables $x_1,\dots,x_k$,
and we are asked whether there exists an assignment of real values to $x_1,\dots,x_k$,
such that $F$ is satisfied. This amounts to deciding whether a system of polynomial inequalities admits
a solution over the reals.
The first main result connecting this complexity class to a geometric 
problem is the celebrated result of Mn\"ev, who showed that \emph{realizability of order types},
or -- in the dual -- stretchability of pseudoline arrangements,
is complete in this complexity class~\cite{mnev1988universality}.
In what follows, we use the simplified reduction due to
Shor~\cite{shor1991stretchability}.

The \emph{orientation} of an ordered triple of points $(p,q,r)$ indicates whether the three points form
a clockwise or a counterclockwise cycle, or whether the three points are collinear.
Let  $P=\{p_1,\dots,p_n\}$ and an orientation $O$ of each triple of points in $P$ be given.
The pair $(P,O)$ is called an \emph{(abstract) order type}.
We say that the order type $(P,O)$ is realizable if there are coordinates in the plane for the points of $P$,
such that the orientations of the triples of points match those prescribed by $O$.

In order to reduce the order type realizability problem to solvability of
a system of strict polynomial inequalities, we have to be able to simulate
arithmetic operations with order types. This uses standard constructions introduced by
von Staudt in his ``{\em algebra of throws}''~\cite{staudt}.

\subsection{Arithmetics with order types.\label{subsec:arithmetics}}

To carry out arithmetic operations using orientation predicates, we
associate numbers with points on a line, and use the \emph{cross-ratio} to encode their values.

The cross ratio $(a,b;c,d)$ of four points $a,b,c,d\in\mathbb{R}^2$ is defined as
$$(a,b;c,d):=\frac{|a,c|\cdot|b,d|}{|a,d|\cdot|b,c|},$$
where $|x,y|$ is the determinant of the matrix obtained by writing the two vectors as columns.
The two properties that are useful for our purpose is that the cross-ratio
is invariant under projective transformations, and that for four points on one line, the cross-ratio
is given by $\frac{\overrightarrow{ac}\cdot\overrightarrow{bd}}{\overrightarrow{ad}\cdot\overrightarrow{bc}}$,
where $\overrightarrow{xy}$ denotes the oriented distance between $x$ and $y$ on the line.

We will use the cross-ratio the following way:
We fix two points on a line and call them $0$ and $1$.
On the line through those points we call the point at infinity $\infty$.
For a point $a$ on this line the cross-ratio $x:=(a,1;0,\infty)$ results in the distance between $0$ and $a$
scaled by the distance between $0$ and $1$. 
Because the cross-ratio is a projective invariant we can fix one line
and use the point $a$ for representing the value $x$.
In this way, we have established the coordinates on one line.

\begin{figure}[ht]
\centering
 \includegraphics[width=.9\textwidth]{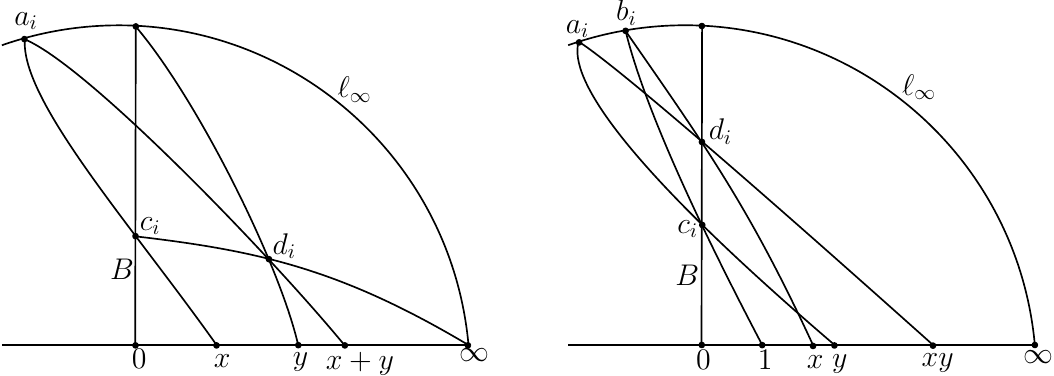}
 \caption{\label{fig:gadgets} Gadgets for addition (left) and multiplication (right) on a line.}
\end{figure}

For computing on this line, the gadgets for addition and multiplication
depicted in Figure~\ref{fig:gadgets} can be used.
Let us detail the case of multiplication.
We are given the points $0< 1< x< y<\infty$ on the line $\ell$, and wish
to construct a point on $\ell$ that represents the value $x\cdot y$.
Take a second line $\ell_\infty$ that intersects $\ell$ in $\infty$,
and two points $a,b$ on this line. Construct the segments $\overline{by},\overline{b1}$ and $\overline{ax}$.
Denote the intersection point of $\overline{ax}$ and $\overline{b1}$ by $c$.
Call $d$ the intersection point of $\overline{by}$ and $\ell(0,c)$.
The intersection point of $\ell$ and $\ell(d,a)$ represents the point $x\cdot y=:z$ on $\ell$,
i.e., $(z,1;0,\infty)=(x,1;0,\infty)\cdot (y,1;0,\infty)$.
In a projective realization of the gadget in which the line $\ell_{\infty}$ is indeed the line
at infinity, the result can be obtained by applying twice the intercept theorem, in the triangles 
with vertices $0, d, y$ and $0, d, z$, respectively.
To add the cross ratios of two points on a line, a similar construction is given in Figure~\ref{fig:gadgets}.

\subsection{The reduction for order types}

Using the constructions above we can already model a system of strict polynomial
inequalities. However, it is not clear how we can determine the complete
order type of the points without knowing the solution of the system. Circumventing this obstacle 
was the main achievement of Mn\"ev~\cite{mnev1988universality}. 
We cite one of the main theorems in a simplified version.

\begin{theorem}[\cite{shor1991stretchability}]
  Every \emph{primary semialgebraic set} $V\subseteq\mathbb{R}^d$ is \emph{stably equivalent}
  to a semialgebraic set $V'\subseteq\mathbb{R}^n$, with $n=\mathrm{poly}(d)$, 
  for which all defining equations have the form 
  $x_i+x_j=x_k$ or $x_i\cdot x_j=x_k$ for certain $1 \leq i \leq j < k\leq n$, where the
  variables $1=x_1<x_2<\dots<x_n$ are totally ordered.
\end{theorem}

A \emph{primary semialgebraic set} is a set defined by polynomial equations and strict polynomial inequalities
with coefficients in $\mathbb{Z}$.
Although we cannot give a complete definition of {\em stable equivalence} within the context of this paper,
let us just say that two semialgebraic sets $V$ and $V'$ are stably equivalent
if one can be obtained from the other by rational transformations and so-called {\em stable projections},
and that stable equivalence implies {\em homotopy equivalence}. 
From the computational point of view, the important property is
that $V$ is the empty set if and only $V'$ is, and that the size of the description of $V'$ in the
theorem above is polynomial in the size of the description of $V$. 
In the proof of universality for PVGs we will only use that we construct PVGs
that contain a subset of points whose order type is the one constructed by
Shor, which has a certain wanted realization
space.
We call the description of a semialgebraic set $V'$ given in the theorem above the \emph{Shor normal form}.

We can now encode the defining relations of a semialgebraic set given in Shor normal form using
abstract order types by simply putting the points $0,1,x_1,\dots,x_n,\infty$ in this order on $\ell$.
To give a complete order type, the orientations of triples including the points of the gadgets and the
positions of the gadget on $\ell_\infty$ have to be specified.
This can be done exploiting the fact that the distances between the points
$a_i$ and $b_i$ of each gadget
and their position on $\ell_\infty$ can be chosen freely. We refer to the references mentioned above for 
further details. We next show how to implement these ideas to construct a graph
$G_V$ associated with a primary semialgebraic set $V$, such that $G_V$ has a PVG realization
if and only if $V\not=\emptyset$.

\section{$\exists\mathbb{R}$-completeness of PVG recognition}

The idea to show that PVG recognition is complete in $\exists\mathbb{R}$
is to encode the gadgets described in the previous section in a fan.

We therefore consider the gadgets not as a collection of points with given order types,
but as a arrangement of segments.
This arrangement can be fixed in a fan if the radial ordering around $p$, the origin of the
fan, is known.

We will consider the addition and multiplication gadgets given in Fig.~\ref{fig:gadgets},
and for a copy $g_i$ of the addition gadget, denote by $a_i,b_i,c_i$, and $d_i$ the
points corresponding to $g_i$, and similarly for the multiplication gadget. 
We describe how to place these points, such that we are able to describe the
complete order type. In addition, we are able to fix the order of the
$x$-coordinates of the intersection points of the arrangement, such that it
does not restrict realizability. This allows us to place $p$ at $(0,-C)$ for
a large number $C$, such that the order of the $x$-coordinates of intersection points agrees with the radial
ordering around $p$.

\begin{theorem}\label{thm:PVGreduction}
The recognition of point visibility graphs is $\exists\mathbb{R}$-complete.
\end{theorem}

\begin{proof}
  Given a primal semialgebraic set $V$, we construct a graph whose realization space as
  point visibility graph is stably equivalent to $V$.
  For a primary semialgebraic set $V$ we compute the Shor normal form and denote the
  corresponding primary semialgebra set by $V'$.
  For $V'$, we construct the order type that implements the calculations
  on the lines $\ell$ and $\ell_\infty$ using the construction of Shor~\cite{shor1991stretchability}:
  we iteratively place the points $a_i, b_i$ of the gadgets on
  $\ell_\infty$ and the points $c_i$ and $d_i$ on a vertical segment $B$ that
  starts on $\ell$.
  The points of $g_i$ are placed closer to the intersection point of $B$ and
  $\ell_\infty$ at each step. This can be done in a way that allows us to
  determine a possible order of the $x$-coordinates of all intersection points
  of the segments constructed.
  First, using a projective transformation we can assume that $\ell_\infty$ is indeed
  the line at infinity. Then we place a new gadget ``close'' to the intersection
  point of $B$ and $\ell_\infty$. This corresponds to choosing segments with a higher
  absolute values for the slopes.
  We choose the slope of a new segment $s$ large enough, so that
  all intersection points with the segments constructed before lie in an
  interval just to the left of the point on $\ell$ that $s$ is originating
  from. This interval to the left is indicated by the gray box in
  Figure~\ref{fig:order_shor} (left).
  
\begin{figure}[ht]
\centering
  \includegraphics[width=.9\textwidth]{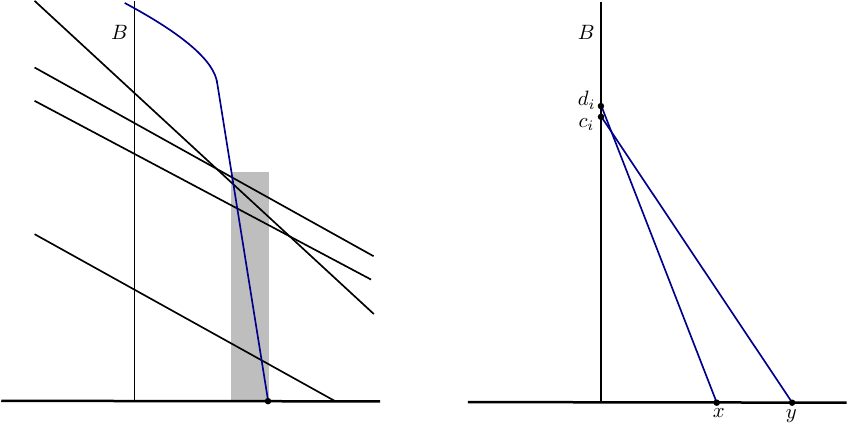}
 \caption{\label{fig:order_shor} Left: The intersection points of a new segment
  (blue) with the old segments (black) can be forced in an arbitrary small
  interval left to the origin of the new segment on $\ell$. Right: The
  intersection point on the left of $B$ in a multiplication gadget can be
  forced to lie arbitrary close to $B$.}
\end{figure}

  So it only remains to determine the relative $x$-position of the intersection points of
  segments within one gadget among all intersection points.  
  The intersection points of the segments in a multiplication gadget are the
  closest points to the vertical segment $B$.
  This is achieved by constructing $c_i$ and $d_i$, the points on $B$, close
  enough together. Moving $c_i$ towards $d_i$ leads to an intersection point
  lying on $B$ in the limit. By continuity, the intersection point can be
  placed close to $B$, see Figure~\ref{fig:order_shor} (right).
\begin{figure}[ht]
\centering
  \includegraphics[width=.9\textwidth]{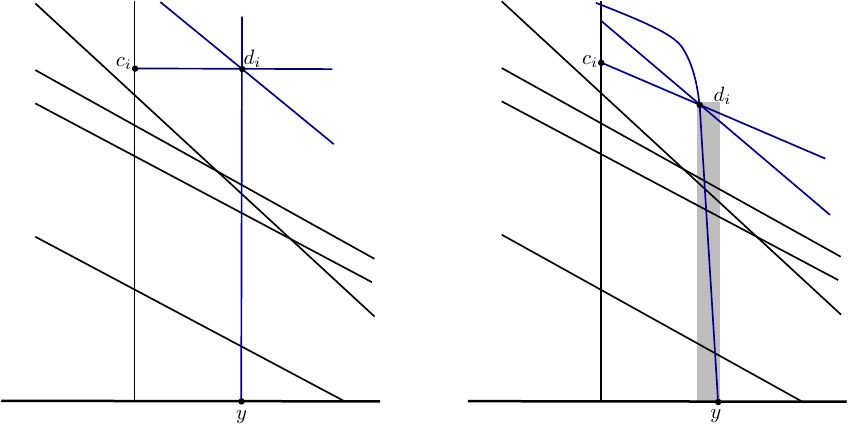}
  \caption{\label{fig:x_order_addition}
  Left: The new intersection points in an addition gadget lie above $y$ if
  $\ell_\infty$ is the line at infinity. Right: The intersection points lie to
  the right of $y$ after perturbing with a projective transformation.}
\end{figure}

  The intersection point of segments in an addition gadget, that does not lie
  on $\ell$ or $\ell_\infty$, lies just to the left of the interval with the old
  segments that lie left of $y$:
  By construction of the gadget the intersection points of the vertical segment
  starting in $y$ has the same $x$-coordinate as $y$, see Figure~\ref{fig:x_order_addition} (left).
  After constructing all gadgets we use a projective transformation to perturb
  the representation, such that the order of the different $x$-coordinates of intersection
  points is preserved.
  For points with the same $x$-coordinates that appear above $y$ in an addition gadget,
  this perturbation will place the points in an interval to the left of $y$ as
  shown in Figure~\ref{fig:x_order_addition} (right).
   
  Now we want to use the fan construction to construct a graph.
  Therefore, we place the point $p$, the origin of the fan, on the coordinates $(0,-C)$ for some large
  negative $C$. If we choose $C$ large enough, then the order of all
  intersection points of segments around $p$ agrees with the $x$-coordinates. 

  Here we have the problem that collinearities between points that lie on
  different segments might occur.

  What we can show this way is a \emph{sandwich version} of the $\exists\mathbb{R}$-completeness:
  Let $H$ be the graph of the  fan we obtain from this construction by assuming no collinearities
  appear between points that do not lie on a common segment or ray.
  Furthermore, we consider all possible combinations of collinearities that can appear in
  the construction and let $F$ be the intersection of all the graph obtained from those fans. 
  Then there is a graph $G$ with $F\subseteq G\subseteq H$ that is a PVG if and only if $V$ is nonempty.
  The possibilities to choose the graph $G$ can be exponentially many.
  Hence for a polynomial reduction we have to determine one of these graphs which we denote by $G_V$,
  that is realizable as PVG in the case that $V$ is nonempty. We do this by
  showing that all unwanted collinearities can be avoided.
 
  Therefore, we choose the positions of the point $a_i$ (and also $b_i$ in a
  multiplication gadget), such that all slopes of segments to $a_i$ and $b_i$ are
  algebraically independent of the coordinates of the points used so far.
  Therefore, we assume the points of the gadgets $g_1,\dots,g_{i-1}$ are already placed.
  After placing the points of the gadget $g_i$ all newly created intersection
  points contain a part of these algebraically independent numbers. Thus the
  lines through two old points cannot go through a new point and vice versa.

  Thus we can avoid all unwanted collinearities.
  We can construct the PVG realization of $G_V$ if and only if $V$ is are
  nonempty, and if we obtain a PVG realization we know that the 
  Now there exists a PVG of $G_V$ realization if $V$ and $V'$ are nonempty.
  The graph $G_V$ can clearly be constructed in polynomial time in the size of
  $V$, this PVG recognition is complete in the existential theory of the reals.
\end{proof}

\begin{remark}
Note that since the construction contains a copy of the order type from Shor's construction, a stronger Mn\"ev-type universality result should hold, 
namely that the realization space of the visibility graph is stably equivalent to the primary semialgebraic set it is constructed from.
 We refer the reader to \cite{orientedMatroids} for more details on the definitions involved.
\end{remark}

\section{Visibility graphs of points on a grid}\label{sec:grid}

From the result of Canny~\cite{canny1988}, we know it is possible to recognize 
point visibility graphs in polynomial space. However, even when the answer to the 
decision problem is positive, it is not clear that the realization of the graph
can be provided as a set of points with integer coordinates. In fact, we 
show in this section that there exist point visibility graphs that cannot be realized 
by points on a grid. We also show that there exist visibility graphs of points on a
grid that require a doubly exponential grid size.

We then turn to the problem of recognizing visibility graphs of points on a grid, and
show that the problem is decidable if and only if the existential theory of the rationals
is decidable, a well-known, major open problem.

\begin{figure}[ht]
\centering
 \includegraphics[width=.3\textwidth]{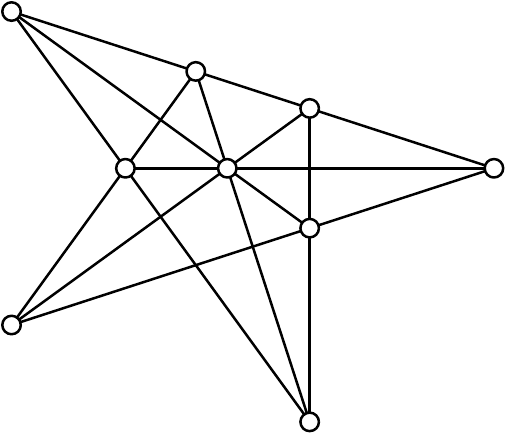}
 \caption{\label{fig:perles} The Perles configuration.}
\end{figure}

\begin{theorem}
There exists a point visibility graph every geometric realization of which has at least one
point with one irrational coordinate.
\end{theorem}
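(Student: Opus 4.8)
The plan is to exhibit a specific configuration of points whose combinatorial incidence structure is realizable over the reals but not over the rationals, and then to wrap it inside a fan so that the forced collinearities pin down the offending incidences in every realization of the resulting PVG. The natural candidate, as the figure in the excerpt already suggests, is the \emph{Perles configuration}: a well-known arrangement of nine points and nine lines (three points on each line, three lines through each point) that admits a realization in the real plane but none with all-rational coordinates. The irrationality is of the projective type: any realization forces a cross-ratio that evaluates to $\sqrt5$ (it is built around a regular pentagon), so no projective image of the configuration can have all coordinates rational.

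First I would recall or state the key projective fact about the Perles configuration: its collinearity data forces a point to divide a segment in the golden ratio, so every realization of these nine incidences, up to projective transformation, involves an irrational coordinate, and moreover this irrationality cannot be removed by any affine or projective change of coordinates. This is the algebraic engine of the theorem and I would quote it rather than reprove it. Next, I would build a PVG that \emph{encodes} exactly these nine points together with their nine collinearity constraints. This is where Lemma~\ref{lem:uniqueFan} does the work: each of the nine lines of the configuration becomes a segment (or a ray through $p$) in a fan, and the lemma guarantees that in \emph{every} geometric realization of $\fan(S)$ the points assigned to one segment remain collinear and the points on one ray remain collinear. Choosing $S$ so that the intersection pattern of its segments and rays reproduces the Perles incidences, I obtain a single graph all of whose realizations must reproduce those nine collinearities.

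The final step is to combine these two ingredients. In any geometric realization of the constructed PVG, Lemma~\ref{lem:uniqueFan} forces the nine designated points to satisfy the nine Perles collinearities; the projective fact then forces at least one of them to have an irrational coordinate, and since no projective (hence no affine, hence no coordinate) change can rationalize the configuration, the whole point set cannot be placed on the rational grid. I should check that the auxiliary fan points (the ones needed only to \emph{transmit} the collinearity constraints) do not accidentally admit a degenerate rational placement that bypasses the argument; this is handled by the uniqueness of the rays and of the per-ray ordering, again from Lemma~\ref{lem:uniqueFan}. The main obstacle is the encoding itself: faithfully realizing the Perles incidence structure as a fan, so that each required collinear triple lies on a common segment or common ray of the fan and no \emph{unwanted} extra collinearities are introduced, while keeping the construction small and the visibility relation consistent. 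Once the encoding is set up correctly, the irrationality transfers automatically from the sub-configuration to the entire PVG.
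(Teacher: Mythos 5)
Your proposal is correct and follows essentially the same route as the paper: take the Perles configuration, whose realizations all require an irrational coordinate, turn its nine lines into segments of a fan between two lines $\ell$ and $\ell'$ meeting at $p$ (with rays from $p$ through each configuration point), and invoke Lemma~\ref{lem:uniqueFan} so that every realization of the resulting PVG preserves the nine collinearities and hence contains a copy of the configuration. The paper handles the encoding concern you raise by choosing $\ell$, $\ell'$ so that all configuration lines cross the same wedge and by placing $s_1,s_2$ to avoid spurious collinearities, exactly in the spirit of your remarks.
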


\begin{proof}
We use the so-called {\em Perles configuration} of 9 points on 9 lines illustrated in
Fig.~\ref{fig:perles}. It is known that for every geometric realization of this configuration in the Euclidean
plane, one of the points has an irrational number as one of its coordinate~\cite{G03}. 
We combine this construction with the fan construction described in the previous section.
Hence we pick two lines $\ell$ and $\ell'$ intersecting in a point $p$, such that all lines
of the configuration intersect both $\ell$ and $\ell'$ in the same wedge. Note that up to a 
projective transformation, the point $p$ may be considered to be on the line at infinity and 
$\ell$ and $\ell'$ taken as parallel. We add two non-intersecting segments $s_1$ and $s_2$ close to $p$, 
that do not intersect any line of the configuration. We then shoot a ray from $p$ through
each of the points, and construct the visibility graph of the original points together 
with all the intersections of the rays with the lines and the two segments $s_1, s_2$. 
From Lemma~\ref{lem:uniqueFan}, all the collinearities
of the original configuration are preserved, and every realization of the graph contains a
copy of the Perles configuration.
\end{proof}

Also note that point visibility graphs that can be realized with rational coordinates do not
necessarily admit a realization that can stored in polynomial space in the number of vertices of the graph.
To support this, consider a line arrangement $\mathcal{A}$, and
add a point $p$ in an unbounded face of the arrangement, such that all intersections
of lines are visible in an angle around $p$ that is smaller than $\pi$.
Construct rays $\ell$ and $\ell'$ through the extremal intersection points and $p$.
From Lemma~\ref{lem:uniqueFan}, the fan of this construction gives a PVG that fixes $\mathcal{A}$.
Since there are line arrangements that require integer coordinates of values
$2^{2^{\Theta(|\mathcal A|)}}$~\cite{goodman1990intrinsic} and the fan has 
$\Theta(|\mathcal{A}|^3)$ points we get the following worst-case lower bound on the coordinates of 
points in a realization of a PVG.

\begin{corollary}\label{cor:size_small}
There exists a point visibility graph with $n$ vertices every realization of which
requires coordinates of values $2^{2^{\Theta(\sqrt[3]{n})}}$.
\end{corollary}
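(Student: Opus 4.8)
The plan is to instantiate the fan construction of the preceding paragraph with a line arrangement that is already known to require doubly-exponentially large integer coordinates, and then to track how the vertex count and the coordinate lower bound translate through the construction. Concretely, I would start from the family of line arrangements of Goodman, Pollack and Sturmfels~\cite{goodman1990intrinsic}: for every $m$ there is an arrangement $\mathcal{A}$ of $m$ lines whose combinatorial incidence structure is so rigid that any realization on the integer grid must use a coordinate of absolute value at least $2^{2^{\Omega(m)}}$, with a matching upper bound $2^{2^{O(m)}}$. The feature that matters for us is that this bound depends only on the incidence structure of $\mathcal{A}$ — on which intersection points are forced to be collinear — and not on the finer order type.

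First I would feed $\mathcal{A}$ into the fan construction: place the two lines $\ell,\ell'$ and the apex $p$ as described above (using the projective normalization that sends $p$ to infinity only as a bookkeeping device), take the $m$ lines of $\mathcal{A}$ as the segments of $S$, and shoot a ray from $p$ through each of the $\binom{m}{2}=\Theta(m^2)$ intersection points. Putting a vertex on every intersection of a ray with a segment yields $\Theta(m^2)\cdot\Theta(m)=\Theta(m^3)$ points, so the resulting point visibility graph $\fan(\mathcal{A})$ has $n=\Theta(m^3)$ vertices, i.e.\ $m=\Theta(\sqrt[3]{n})$.

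The heart of the argument is the transfer of the lower bound. By Lemma~\ref{lem:uniqueFan}, every realization of $\fan(\mathcal{A})$ keeps collinear exactly those points that lie on a common segment or a common ray; in particular each line of $\mathcal{A}$ survives as a genuine line of collinear points, so the whole point--line incidence structure of $\mathcal{A}$ is reproduced inside every realization. Hence any integer realization of $\fan(\mathcal{A})$ restricts to an integer realization of the incidence structure of $\mathcal{A}$, which by the choice of $\mathcal{A}$ forces a coordinate of size at least $2^{2^{\Omega(m)}}=2^{2^{\Omega(\sqrt[3]{n})}}$. For the matching upper bound I would take an integer realization of $\mathcal{A}$ of size $2^{2^{O(m)}}$, build the fan directly on top of it, and observe that the coordinates of the newly added intersection points are rational functions of the original ones, so that after clearing denominators they stay within $2^{2^{O(m)}}=2^{2^{O(\sqrt[3]{n})}}$, giving the claimed $\Theta$ in the exponent.

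The step I expect to be the main obstacle is justifying that the doubly-exponential lower bound passes cleanly from the arrangement to the full point visibility graph. This requires being precise that the Goodman--Pollack--Sturmfels bound is a function of the combinatorial incidence type — exactly the data that Lemma~\ref{lem:uniqueFan} preserves — and that the projective normalization used to place $p$ on the line at infinity is only applied to the auxiliary drawing and never to the realization whose coordinates we are bounding, so that no spurious rescaling can shrink the forced coordinates.
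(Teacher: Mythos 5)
Your proposal follows essentially the same route as the paper: instantiate the fan construction with a Goodman--Pollack--Sturmfels arrangement $\mathcal{A}$ of $m$ lines, observe that the fan has $n=\Theta(m^3)$ vertices, and transfer the doubly exponential coordinate bound of~\cite{goodman1990intrinsic} to every realization via Lemma~\ref{lem:uniqueFan}. The one caveat is your justification of the transfer step: you assert that the bound of~\cite{goodman1990intrinsic} depends only on the point--line incidence structure of $\mathcal{A}$, which is stronger than what that reference states (it is phrased for combinatorial types of arrangements), whereas the paper's --- and the safer --- justification is that the fan fixes the arrangement $\mathcal{A}$ itself, since the proof of Lemma~\ref{lem:uniqueFan} pins down not just the collinearities but also the assignment of points to rays, the order of the rays, and the order of points along each ray, which together determine the combinatorial type of $\mathcal{A}$.
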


We now prove that the recognition problem for visibility graphs on a grid is decidable if and 
only if the \emph{existential theory of the rationals} is decidable. The definition of the latter is analogous to
that of the existential theory of the reals, except we now seek a solution in $\mathbb{Q}^k$.

The computational complexity of answering the question \emph{``Does this object have
a realization on a grid?''} is unknown for various types of objects.
Among those objects are, most prominently, polytopes and
oriented matroids and (non-simple) order types.
Matiyasevich~\cite{matiyasevich1970enumerable} showed that the existential theory of the
integers is undecidable by giving a negative solution to Hilbert's tenth
problem: Deciding whether a diophantine equation has a solution is undecidable.
This cannot be directly applied to a grid realization of a PVG, since a
realization of a PVG with rational coordinates, which can be obtained by a rational solution of
the inequality system, leads to a grid realization by scaling.
Hence for those geometric realizations on the grid the decidability of
Hilbert's tenth problem over the rationals is of interest.
Gr\"unbaum~\cite{grunbaumarrangement} conjectured in 1972 that there is no algorithm that enumerates all
arrangements in the rational projective plane, which is equivalent to the
recognition problem of order types that can be represented on a grid.
This conjecture is still open.

Similar to work of Sturmfels~\cite{sturmfels1987decidability} for oriented
matroids and polytopes we show the following theorem.

\begin{theorem}\label{thm:rationalUndecidable}
The realization problem for visibility graphs of points on a grid is decidable
if and only if the existential theory of the rationals is decidable.
\end{theorem}

Before proving this theorem we point out a connection to finding an upper bound on the grid size
of a PVG that is realizable on a grid.

\begin{corollary}
Suppose the recognition problem for PVG on a grid is undecidable.
Then there is no computable function $f:\mathbb{N}\rightarrow \mathbb{N}$ such
that each PVG with $n$ points that is realizable on a grid
can be drawn on a grid of size $f(n)\times f(n)$.
\end{corollary}

\begin{proof}
We suppose that a computable function $f:\mathbb{N}\rightarrow \mathbb{N}$ exists,
such that every PVG that is realizable on a grid with $n$ vertices can be represented
on a grid of size $f(n)\times f(n)$. Using this function we can give an algorithm that 
decides whether a graph  $G$ with $|V(G)|=n$ has a realization as a PVG on a grid.

We first compute $f(n)$, then for each $x\in [f(n)]^{2n}$ we check whether $G$ is the 
PVG of the point set $\left(x(v_1),y(v_1),\dots,x(v_n),y(v_n)\right)=x$.
If there is such an $x$ the algorithm returns the realization,
otherwise no realization exists.

This algorithm is clearly an effective decision procedure, and thus the recognition
problem for PVG on a grid is decidable -- a contradiction to the assumption.
\end{proof}

\begin{proof}[Proof of Theorem~\ref{thm:rationalUndecidable}]
  To prove this theorem we construct a set of graphs from a semialgebraic set~$V$,
  such that one of the graphs has a rational representation as PVG if and only if~$V$ has.
  As a shortcut we use the result of
  Sturmfels~\cite{sturmfels1987decidability}, that the problem of realizing a
  line arrangement with rational coordinates is complete in the existential
  theory of the rationals.
  The plan is to encode a given (pseudo)line arrangement in a fan by placing
  the origin of the fan $p$, such that all intersection points lie in one
  halfspace.
  By Lemma~\ref{lem:uniqueFan}, a realization of the resulting PVG leads to a
  realization of the line arrangement.
  However, we know neither the radial order of all intersection points around
  $p$ nor which points of the fan, that do not lie on different lines are
  collinear. 
  Thus, we apply the fan construction for all possible radial orderings and all
  possible additional blocked visibilities. This gives a finite set of graphs,
  one of which has a rational PVG realization if and only if the arrangement
  has a rational realization:
  From a rational realization of the line arrangement we obtain a rational PVG
  by applying the fan construction as described with a rational origin $p$.
  This graph is contained in the set of graphs we constructed.
  On the other hand, from a rational PVG representation of one of the graphs we
  obtain a rational PVG representation by Lemma~\ref{lem:uniqueFan}.
\end{proof}

\subsection*{Acknowledgments}
We thank an anonymous SoCG referee for pointing out an error in the original
proof.
We also thank the anonymous D\&CG referee for his numerous suggestions,
which helped to simplify the constructions significantly compared to the
conference version.

\bibliographystyle{plain}
\bibliography{all.bib}

\end{document}